\newtheorem{proposition}{Proposition}
\newtheorem{definition}{Definition}
\def\BibTeX{{\rm B\kern-.05em{\sc i\kern-.025em b}\kern-.08em
    T\kern-.1667em\lower.7ex\hbox{E}\kern-.125emX}}
\begin{document}

 \title{Risk-Sensitive Reinforcement Learning for URLLC Traffic in Wireless Networks}

\author{Nesrine Ben-Khalifa$^\diamond$, Mohamad Assaad$^\diamond$, M\'erouane Debbah$^*$\\
\IEEEauthorblockA{ $^\diamond$TCL chair on 5G, Laboratoire des Signaux et Syst\`emes (L2S), Centrale Sup\'elec, France\\
$^*$Huawei Technologies, Boulogne-Billancourt, France}}


\maketitle
\begin{abstract}
In this paper, we study the problem of dynamic channel allocation for URLLC traffic in a multi-user multi-channel wireless network where urgent packets have to be successfully transmitted in a timely manner. We formulate the problem as a \textit{finite-horizon} Markov Decision Process with a stochastic constraint related to the QoS requirement, defined as the {\textit{packet loss rate}} for each user. We propose a novel weighted formulation that takes into account both the total expected reward (number of successfully transmitted packets) and the {\textit{risk}} which we define as the QoS requirement violation. First, we use the value iteration algorithm to find the optimal policy, which assumes a perfect knowledge of all the model parameters, namely the channel statistics. We then propose a Q-learning algorithm where the controller learns the optimal policy without having knowledge of neither the CSI nor the channel statistics. We illustrate the performance of our algorithms with numerical studies.
\end{abstract}

\vspace{0.2cm}
\begin{IEEEkeywords}
URLLC, risk-sensitivity, resource allocation, constrained MDP, reinforcement learning
\end{IEEEkeywords}

\section{Introduction}

In the fifth generation (5G) wireless networks, there are new service categories with heterogeneous and challenging requirements, among them the Ultra Reliable Low Latency (URLLC) traffic \cite{urllc_bennis}, designed for delay and reliability sensitive applications like real-time remote control, autonomous driving, and mission-critical traffic. In URLLC traffic, the End-to-End (E2E) latency defined by 3GPP is lower than $1$ ms along with a reliability requirement of $1-10^{-5}$ to $1-10^{-9}$ \cite{urllc_bennis,urllc_2018}. 


A plausible solution to address the latency requirement issue is to make transmissions without Channel State Information (CSI) knowledge at the transmitter side. To increase reliability, exploiting frequency diversity is beneficial, and occurs by making parallel transmissions of the same packet over different subcarriers in an Orthogonal Frequency Division Multiplexing (OFDM) system where each subcarrier experiences different channel characteristics.

However, this solution is costly in terms of system capacity. Therefore, the number of parallel transmissions should not be fixed in advance but should rather be variable and depending on many parameters such as the position of a user in the cell, or the statistics about his packet losses over the previous time slots. For example, if a user experienced a high number of packet losses in the previous time slots, it should be allocated a high number of subchannels to increase his success probability, whereas a user with a low number of dropped packets may be assigned a low number of subcarriers. Hence, it is crucial to design efficient dynamic schemes able to adapt the number of parallel transmissions for each user to his experienced QoS.



In this work, we study the problem of dynamic channel allocation for URLLC traffic in a multi-user multi-channel wireless network under QoS constraints. A channel here refers to a frequency band or a subcarrier in an OFDM system, and the QoS is related to the {\textit{packet loss rate}} for each user, defined as the average number of dropped packets. Besides, we introduce the notion of {\textit{risk}} related to the violation of the QoS requirements; more precisely, a risk occurs or equivalently, a risk state is reached when the QoS requirement is violated for a user. Furthermore, we consider that the transmitter does not have neither the CSI nor the channel statistics at the transmission moment. In fact, due to the urgency of URLLC packets mentioned previously, there is not enough time for the BS to make channel estimation and probing techniques like in conventional wireless communications.

\subsection{Related Work}

The issue of deadline-constrained traffic scheduling has been investigated by several works including \cite{deadline,spawc,wi_opt2018,wiopt2017}. For example, in \cite{wiopt2017}, the authors study the problem of dynamic channel allocation in a single user multi-channel system with service costs and deadline-constrained traffic.  They propose online algorithms to enable the controller to learn the optimal policy based on Thompson sampling for multi-armed bandit problems.
The MDP framework and reinforcement learning approaches for downlink packet scheduling are considered in \cite{deadline,wi_opt2018,assaad_2009,assaad_2,assaad_3,assaad_4}. In \cite{deadline}, the authors propose an MDP for deadline-constrained packet scheduling problem and use dynamic programming to find the optimal scheduling policies. The authors do not consider QoS constraints in the scheduling problem.

Most risk-sensitive approaches consist in analyzing higher order statistics than the average metric such as the variance of the reward \cite{urllc_bennis,risk1,risk2,risk_assaad}. For instance, a risk-sensitive reinforcement learning is studied in \cite{risk_mmwave} in millimeter-wave communications to optimize both the bandwidth and transmit power. The authors consider a utility (data rate) that incorporates both the average and the variance to capture the tail distribution of the rate, useful for the reliability requirement of URLLC traffic. The authors do not exploit frequency diversity.

In this work, we consider an alternative approach to the risk which consists in minimizing the risk state visitation probability. In fact, due to the stochastic nature of the problem (time-varying channel and random arrival traffic in our context),  giving a low reward to an undesirable or a risk-state may be insufficient to minimize the probability of  visiting such state \cite{risk_sensitive}. Therefore, in addition to the maximization of the total expected reward, we propose to consider a second criterion which consists in minimizing the probability of visiting risk states where a risk state here is related to the violation of QoS requirements. 
\subsection{Addressed Issues and Contribution}
In this work, we address the following issues:
\begin{itemize}
\item[$\bullet$] We formulate the dynamic channel allocation problem for URLLC traffic as a \textit{finite-horizon} MDP wherein the state represents the QoS of the users, that is, the average number of dropped packets or {\textit{packet loss rate}} of the users. The decision variable is the number of channels to assign to each user. We define a \textit{risk state} as any state where the QoS requirement is violated for at least one user. Besides, we define a  stochastic constraint related to the risk state visitation probability.
\item[$\bullet$] Assuming the channel statistics are known to the controller, we use the finite-horizon value iteration algorithm to find the optimal policy to the weighted formulation of the problem, which takes into account both the total expected reward over the planning horizon and the {\textit{risk}} criterion (QoS requirement violation probability).  
\item[$\bullet$] When the channel statistics are unknown to the controller, we propose a reinforcement learning algorithm (Q-learning) for the weighted formulation of the problem, which enables the controller to learn the optimal policy. We illustrate the performance of our algorithms with numerical studies.

\end{itemize}
\subsection{Paper Structure}
In Section \ref{model}, we present the system model for the multi-user multi-channel wireless network with URLLC packets and time-varying channels along with the QoS definition. In Section \ref{CMDP}, we introduce the constrained MDP formulation with all its components. In Section \ref{algorithm}, we present both the finite-horizon value iteration algorithm and the reinforcement learning algorithm. Section \ref{numerical} is devoted to numerical results. Finally, we conclude the paper in Section \ref{conclusion}.
\section{System Model}\label{model}

We consider a multi-user multi-channel  wireless network where URLLC packets have to be transmitted over time-varying and fading channels. Due to the strict latency requirement of URLLC packets in 5G networks mentioned previously, there is not enough time for the BS to estimate the channel, and the packets are then immediately transmitted in the absence of CSI at the transmitter side. When a packet is successfully decoded, the receiver sends an acknowledgment feedback, which is assumed to be instantaneous and error-free. We consider a centralized controller which dynamically distributes the channels to the users based on their QoS (see Fig. \ref{fig:scheme}).
\begin{figure}[t!]
\centering
 \begin{tikzpicture}[ 
   node distance = 4mm and 16mm,
mynode/.style = {shape=signal, signal to=west and east,
                 draw, 
                 text width=1.6cm, align=flush center, 
                 inner xsep=0mm, inner ysep=3mm}
                 ]
                 
      \tikzstyle{shape2} = [rectangle, rounded corners, minimum width=1.3cm, minimum height=0.6cm,text centered, draw=black]
\node (B) [mynode]   {Controller};
\node (ue1) [shape2,above right=0.6cm and 2.2cm of B]   { user $1$ };
\node (uek) [shape2, below right=0.6cm and 2.2 cm of B]   { user $K$ };
\coordinate[above=0.5cm of ue1]     (a1);
\coordinate[above=0.5cm of uek]     (ak);
\coordinate[above=0.1cm of ue1]     (a12);
\coordinate[above=0.1cm of uek]     (ak2);
\coordinate[right=0.00cm of B.east]     (outc);
\coordinate[right=of ue1]        (outue1);
\coordinate[right=of uek]    (outuek);
\coordinate[left=0.8cm of ue1]    (inue1);
\coordinate[left=0.8cm of uek]    (inuek);
\coordinate[right=0.2cm of B] (outc2);
\coordinate[right=1cm of ue1]        (outue12);
\coordinate[right=1cm of uek]        (outuek2);
\coordinate[right=1.7cm of uek]        (outuek3);
\coordinate[below right=0.3cm and 1.7cm of uek]        (outuek4);
\coordinate[below left=0.3cm and 4.9cm of uek]        (outuek5);
\coordinate[below left=1cm and 0.5cm of ue1](vdot);
\coordinate[left=0.65cm of B.west] (outuek6);

\draw[dashed, black,thick,shorten >=1mm] (ue1.east) --  (outue12) -- +(0.7,0) -- (outuek3) ;
\draw[->,dashed, black, thick] (uek.east) --  (outuek2) -- (outuek3) --  (outuek4) -- (outuek5) -- (outuek6) -- (B.west) ;
\node[ above left= 0.1 cm and -0.001cm of outue1] {$\rho_1(t+1)$};
\node[ above left= 0.1 cm and -0.001cm of outuek] {$\rho_K(t+1)$};
\node[above right=0.1cm and 1.8cm of vdot]{\large{$\vdots$}};
\node[above=-0.02cm of a1]{$a_1(t)$};
\node[above=-0.02cm of ak]{$a_K(t)$};
\draw[->,black,thick] (a1) -- (a12) ;
\draw[->,black,thick] (ak) -- (ak2) ;
\draw[->,black, thick] (outc) --  (outc2) -- (inue1) --  (ue1);
\draw[->,black,  thick] (outc2) --   (inuek) -- (uek);
\node[above right=0.1cm and -0.1cm of inue1] {$\ell_1(t)$};
\node[above right=0.1cm and -0.1cm of inuek] {$\ell_K(t)$};
\end{tikzpicture}
\caption{Dynamic allocation of channels $(\ell_1,..,\ell_K)$ to the users based on their QoS $(\rho_1,..,\rho_K)$.}\label{fig:scheme}\end{figure}
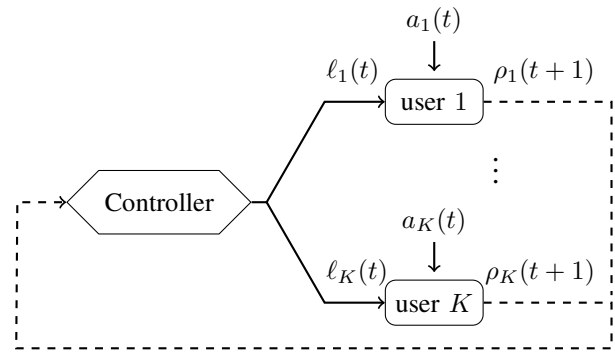

Furthermore, we make the following assumptions:

{\textit{Packet arrival process}}: the packet arrival process is considered as an independent and identically distributed (i.i.d.) random process over a finite set $\mathcal{I}=\{ 0,1,..,A_{max} \}$, where $A_{max}$ is a positive constant, and is identical for all the users. Let $\alpha_a$ denote the probability that $a \in \mathcal{I}$ packets arrive for a given user at the beginning of a time slot.

{\textit{Deadline-constrained traffic:}} regarding the strict URLLC latency requirement specified by 3GPP (lower than 1 ms), each packet has a lifetime of one time slot and can either be served or dropped; if there are available channels, the packet will be transmitted, otherwise, it will be dropped because after one time slot it becomes outdated and useless. Furthermore, one packet is transmitted per channel.

{\textit{Channel model}}: we consider  i.i.d. Bernoulli channels with a mean $\mu \in [0,1]$. In millimeter-wave communications, the links are characterized by their intermittence and high sensitivity, and  this channel model reflects the existence of a light-of-sight (LOS) channel state\cite{wiopt2017,mmwave}. To increase reliability, a user can be assigned more channels than the number of waiting packets (depending on his experienced QoS). Some packets are then simultaneously sent over multiple parallel channels.
 

\textit{Channel split:} for each user, all the packets are equally important: when the number of available channels is larger than that of waiting packets, we assume that some packets are picked uniformly at random to be replicated. A packet is obviously more likely to be successfully transmitted when sent over many channels simultaneously. However, assigning more channels to a user will affect the QoS experienced by the other users. Note that channel split across the packets (which occurs in the same manner for all the users) should not be confused with the channel split across the users (which takes into account the QoS perceived by the users). 

For user $k$, the distribution of available channels $\ell_k$ over the waiting packets $a_k$ occurs as follows:
each packet is transmitted over ($\ell_k \wedge a_k$) channels and may be furthermore replicated once with a probability ($\frac{\ell_k \vee a_k}{a_k}$), where the symbol $\ell_k\wedge a_k$ denotes the larger integer $m$ such that $m a_k \leqslant \ell_k$, and $\ell_k\vee a_k$ denotes the remaining integer of the division of $\ell_k$ by $a_k$.

The probability that a packet is successfully transmitted given that there are $a_k$ waiting packets at the transmitter and $\ell_k$ assigned channels can then be expressed by
\begin{eqnarray}
\nu_k(a_k,\ell_k)&=& \left (1- \frac{\ell_k \vee a_k }{a_k} \right ) \left (1-(1-\mu)^{(\ell_k \wedge a_k)} \right ) \nonumber \\ && + \left ( \frac{\ell_k \vee  a_k}{a_k} \right ) \left (1- (1-\mu)^{ 1+ (\ell_k \wedge a_k)} \right).
\end{eqnarray}
The expected number of successfully transmitted packets for user $k$ is then given by
\begin{eqnarray}
\mathbb{E} \left [ N_k(\ell_k) \right ]=\sum_{a_k \in \mathcal{I}}^{}{a_k \alpha_{a_k} \nu_k(a_k,\ell_k)}.
\end{eqnarray}
{\textit{ QoS criterion:}} for each user $k$, we define the packet loss rate at time slot $t$, $\rho_k(t)$, as follows
\begin{eqnarray}\label{ro}
\rho_k(t)=\frac{1}{t}\sum_{i=0}^{t-1}{\frac{n_k(i)}{a_k(i)}}, \quad t\geqslant 1,
\end{eqnarray}
where $n_k(t)$ denotes the number of lost packets for user $k$ at time slot $t$.
Note that $\rho_k \in [0,1]$ ($n_k(t) \leqslant a_k(t)$). A packet is lost when either of the two following events occurs:
\begin{itemize}
\item[(i)] it is not transmitted because of insufficient available channels,
\item[(ii)] is transmitted but ACK feedback is not received. 
\end{itemize}
The parameter $\rho_k$ reflects the QoS perceived by user $k$: higher values of $\rho_k$ mean a higher number of lost packets and poor QoS whereas lower values of $\rho_k$ mean good QoS. To ensure good QoS for the users, the resource allocation scheme should take account of their experienced QoS and keep this parameter values for all users within an acceptable range.

Finally, the decision variable is the number of channels associated to each user $k$ at each time slot, denoted by $\ell_k$, which satisfies
\begin{eqnarray}\label{contrainte}
\sum_{k=1}^{K}{\ell_k(t)}=L,
\end{eqnarray}
where $L$ denotes the number of available channels.
\section{Constrained MDP Framework}\label{CMDP}
The stochastic nature of the wireless channel incites us to consider an MDP framework to solve the decision problem. In this section, we first introduce the constrained MDP formulation along with its components. We then derive the optimality equations.
\subsection{Model Formulation}
We define the following \textit{finite-horizon} MDP 
\begin{itemize}
\item[$\bullet$] \textit{State Space}: is the finite set $\mathcal{T}\times\mathcal{S}$ where $\mathcal{T}=\{0,..,T \}$, $\mathcal{S}=\left \{\rho_1 \times .. \times \rho_K \right \}$, $\rho_k$ for $k=1,..,K$ is defined in \eqref{ro}, and the symbol $\times$ stands for the Cartesian product.
\item[$\bullet$]  \textit{Action Space}: is the finite set $\mathcal{L}=\{ (\ell_1,..,\ell_K) \mbox{ satisfying } \eqref{contrainte} \}$, where $\ell_k$ denotes the number of channels assigned to user $k$. 
\item[$\bullet$] \textit{Reward}: we define the reward $r$ at time slot $t$, when the controller chooses action $\ell \in \mathcal{L}$ in state $s_t$, as the expected total number of successfully transmitted packets over all the users, that is,
\begin{eqnarray}\label{reward}
r(s_t,\ell)=\mathbb{E}\left [ {\sum_{k=1}^{K}{N_k}(\ell_k)} \right ].
\end{eqnarray}
Note that the reward depends only on the number of channels allocated for each user (the action), and not on the current state $s_t$. Besides, the reward is a non-linear function of the action.
\item[$\bullet$]\textit{Transition Probabilities:} 

First, we define the probability that $n$ packets are lost for user $k$ as a function of the number of waiting packets $a_k$ and the number of assigned channels $\ell_k$ at a given time slot as follows
\begin{eqnarray*}
\sigma_k(n,a_k,\ell_k)=\binom{a_k}{n} \Big ( 1-\nu_k(a_k,\ell_k)   \Big )^n \nu_k(a_k,\ell_k)^{a_k-n},
\end{eqnarray*}
where $n \leqslant a_k$ and $\binom{a_k}{n}$ denotes the binomial coefficient. 
 The state transition probability for user $k$ is given by
\begin{eqnarray}
p(\rho'_k \,|\, \rho_k(t), \ell_k)=\alpha_{a_k} \sigma_k(n,a_k,\ell_k),
\end{eqnarray}
where
\begin{eqnarray}
\rho'_k=\frac{t}{t+1}\rho_k+\frac{1}{t+1}\frac{n}{a_k}.
\end{eqnarray}
Finally, let $s_{t+1}=\rho'_1\times..\times \rho'_K$ and $s_t=\rho_1\times..\times \rho_K$, the transition probability from state $s_t$ to state $s_{t+1}$ given when action $l$ is taken, is then given by
\begin{eqnarray}
p(s_{t+1} \, | \, s_t,\ell)=\prod_{k=1}^{K}{p(\rho'_k \, | \, \rho_k(t),\ell_k)}.
\end{eqnarray}
\end{itemize}
Regarding the strict requirements of URLLC packets described earlier, we introduce in the following the notion of a \textit{risk-state}.
\begin{definition}
We define a \textit{risk state} any state  where $\rho_k > \rho_{max}$ for any $k \in \{1,..,K\}$ with $\rho_{max}>0$ is constant fixed by the controller. The set of risk states $\Phi$ is then,
\begin{eqnarray*}
\Phi=\{\rho_1\times .. \times \rho_K  \mbox{ where there $\exists$ } k \mbox{ such that } \rho_k > \rho_{max} \}.
\end{eqnarray*}
\end{definition}
Besides, a risk-state is an absorbing state, that is, the process ends when it reaches a risk state \cite{risk_sensitive}.

A deterministic policy $\pi$ assigns at each time step and for each state an action. Our goal is to find an optimal deterministic policy $\pi^*$ which maximizes the  total expected reward $\mathcal{V}_T^{\pi}(s)$ given by
\begin{eqnarray}\label{v_function}
\mathcal{V}_T^\pi(s)=\mathbb{E}^{\pi}\left [ \sum_{t=0}^{T}{ r(s_t,\pi(s_t)) | \, s_0=s } \right ],
\end{eqnarray}
with the reward $r$ is defined in \eqref{reward}, while satisfying the QoS constraint given by
\begin{eqnarray}\label{constraint}
\eta^{\pi}(s)<w,
\end{eqnarray}
where $\eta^{\pi}(s)$ denotes the probability of visiting a risk state over the planning horizon, given that the initial state (at time slot $0$) is $s$ and policy $\pi$ is followed, and $w$ is a positive constant. Formally,
\begin{eqnarray}
\eta^{\pi}(s)=\mathbb{P}^{\pi}\; ( \exists \,t \mbox{ such that } s_{t} \in \Phi |s_0=s).
\end{eqnarray}
In order to explicitly characterize $\eta^{\pi}(s)$, we introduce in the following the risk signal $\overline{r}$.
\begin{definition}
We define a risk signal $\overline {r}$ as follows
\begin{equation}
\overline {r}(s_t,\ell_t,s_{t+1})=
\left \{
\begin{array}{cc}
1& \quad \mbox { if } s_{t+1} \in \Phi \\
0&  \quad \mbox{ otherwise,}
\end{array}
\right.
\end{equation}
where $s_t$ and $\ell_t$ denote the state and action at time slot $t$, respectively, and $s_{t+1}$ denotes the subsequent state.
\end{definition}
\begin{proposition}\label{prop1}
\textbf{The probability of visiting a risk-state, $\eta^{\pi}(s)$,  is given by} 
\begin{eqnarray}\label{eq:prob}
\eta^{\pi}(s)=\overline{\mathcal V}_T^{\pi}(s),
\end{eqnarray}
\textbf{where we set}
\begin{eqnarray}\label{v_var}
\overline{\mathcal V}_T^{\pi}(s)=\mathbb{E}^{\pi}\left [ \displaystyle \sum_{t=0}^{T}{\overline{r}\left (s_t, \pi(s_t), s_{t+1} \right ) \, | \, s_0=s} \right ].
\end{eqnarray}
\end{proposition}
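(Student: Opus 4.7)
The plan is to exploit the absorbing nature of the risk set $\Phi$ so as to rewrite $\overline{\mathcal{V}}_T^{\pi}(s)$ as the expectation of an indicator of the event $\{\exists\, t : s_t \in \Phi\}$. Because each sample path enters $\Phi$ at most once in the effective sense (the process terminates there), the telescoping running sum of the risk signals collapses to a $0/1$ variable.

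First I would introduce, for a fixed policy $\pi$ and initial state $s_0=s \notin \Phi$, the first-passage time $\tau=\inf\{t\geqslant 1:\ s_t\in\Phi\}$ (with the convention $\tau=+\infty$ if $\Phi$ is never reached). Since $\Phi$ is absorbing and the process ends upon entering $\Phi$, for every sample path the term $\overline{r}(s_t,\pi(s_t),s_{t+1})=\mathbbm{1}\{s_{t+1}\in\Phi\}$ can be non-zero only for $t=\tau-1$: indeed, for $t<\tau-1$ we have $s_{t+1}\notin\Phi$ by definition of $\tau$, and for $t\geqslant \tau$ no further transitions occur (or equivalently the path is frozen and so no new entry into $\Phi$ is possible). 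Hence, pathwise,
\begin{equation}
\sum_{t=0}^{T}\overline{r}(s_t,\pi(s_t),s_{t+1}) \;=\; \mathbbm{1}\{\tau\leqslant T+1\}\;=\;\mathbbm{1}\{\exists\,t\ \mathrm{with}\ s_t\in\Phi\}.
\end{equation}

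Taking the $\mathbb{P}^{\pi}$-expectation conditional on $s_0=s$ and using linearity, the left-hand side becomes $\overline{\mathcal{V}}_T^{\pi}(s)$ by the definition in \eqref{v_var}, while the right-hand side equals $\mathbb{P}^{\pi}(\exists\,t:\,s_t\in\Phi\mid s_0=s)=\eta^{\pi}(s)$. This yields the claimed identity \eqref{eq:prob}.

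The only subtle point, and where care is required, is the handling of the absorbing convention: one must make sure that no spurious extra $+1$ contributions to $\overline{r}$ arise after the first entry into $\Phi$. This is precisely what the assumption ``a risk-state is absorbing, i.e. the process ends when it reaches a risk state'' buys us; without it the sum could overcount and one would only obtain an inequality $\eta^{\pi}(s)\leqslant \overline{\mathcal{V}}_T^{\pi}(s)$. Equivalently, one could formalize the argument by extending the chain with a cemetery state reached deterministically from any $s'\in\Phi$, on which $\overline{r}\equiv 0$; this makes the collapse of the sum rigorous and avoids any ambiguity about post-absorption dynamics.
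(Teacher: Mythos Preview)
Your proof is correct and follows essentially the same approach as the paper: both arguments use the absorbing nature of $\Phi$ to conclude that the pathwise sum $\sum_{t=0}^{T}\overline{r}(s_t,\pi(s_t),s_{t+1})$ is a $\{0,1\}$-valued (Bernoulli) random variable, whose expectation therefore equals the probability of reaching a risk state. Your version is simply more explicit, introducing the first-passage time $\tau$ and the cemetery-state convention to make rigorous what the paper states informally in one sentence.
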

\begin{proof}
The random sequence $\overline r(t=0)$, $\overline r(t=1)$,.., $\overline r(t=T)$ may contain $1$ if a risk state is visited, otherwise all its components are equal to zero (recall that a risk state is an absorbing state). Therefore, $\sum_{t=0}^{T}{\overline{r}(t)}$ is a Bernoulli random variable with a mean equal to the probability of reaching a risk state, that is, relation \eqref{eq:prob} holds.
\end{proof}
\subsection{Optimality Equations}
By virtue of Proposition  \ref{prop1}, we associate a state value function $\overline{\mathcal V}_T^{\pi}$ to the probability of visiting a risk state. Now, we define a new weighted value function $\mathcal{V}_{\xi, T}^{\pi}$, which incorporates both the reward and the risk, as follows
\begin{eqnarray}\label{weighted_v}
\mathcal{V}_{ \xi, T}^{\pi}(s)=\xi \mathcal{V}_T^{\pi}(s)-\overline{ \mathcal{V}}_T^{\pi}(s),
\end{eqnarray}
where $\xi>0$ is the weighting parameter, determined by the risk level the controller is willing to tolerate. The function $\mathcal{V}^{\pi}_{\xi,T}$ can be seen as a standard value function associated to the reward $\xi r -\overline r$. The case $\xi=0$ corresponds to a {\textit{minimum-risk}} policy whereas the case $\xi \rightarrow \infty$ corresponds to a {\textit{maximum-value}} policy.

Let $\Pi$ denote the set of deterministic policies, and define 
\begin{eqnarray*}
\mathcal{V}_T^*(s)=\underset{ \pi \in \Pi}{ \mbox{ max } } \mathcal{V}_T^{\pi}(s), \quad \mathcal{ \overline{V}}_T^*(s)=\underset{ \pi \in \Pi}{ \mbox{ min } } \mathcal{\overline {V}}_T^{\pi}(s), 
\end{eqnarray*}
\begin{eqnarray*}
\mathcal{V}^*_{\xi,T}(s)=\underset{ \pi \in \Pi}{ \mbox{ max } } \mathcal{V}^{\pi}_{\xi,T}(s).
\end{eqnarray*}
Besides, we define $u^{\pi}_t$, $\overline{u}_t^{\pi}$, and $u^{\pi}_{\xi,t}$ for $0\leqslant t \leqslant T $ respectively by
\begin{eqnarray}
u^{\pi}_t(s)&=&\mathbb{E}^{\pi}\left [  \sum_{i=t}^{T}{r(s_i,\pi(s_i)) } |\, s_t=s \right ], \label{u1} \\
\overline{u}^{\pi}_t(s)&=&\mathbb{E}^{\pi}\left [  \sum_{i=t}^{T}{\overline{ r }(s_i,\pi(s_i),s_{i+1}) } |\, s_t=s \right ], \label{u2} \\
u^{\pi}_{\xi,t}(s)&=&\xi u^{\pi}_t(s)-\overline{u}^{\pi}_t(s). \label{u3}
\end{eqnarray}
Note that $\mathcal{V}^{\pi}_{T}$ incorporates the total expected reward over the entire planning horizon whereas $u_t$ incorporates the rewards from decision epoch $t$ to the end of the planning horizon only. Besides, $\overline{u}_t(s)$ is the probability of visiting a risk state given that at time $t$ the system is in state $s\in \{ \mathcal{S} / \Phi \}$, and is thus a measure of the risk.

The optimality equations are given by (the proof is similar to that in \cite{puterman}, chap. 4 and skipped here for brevity)
\begin{eqnarray}
u_t^*(s)&=&\underset{\ell \in \mathcal{L}}{ \mbox{max }} \Big \{ r (s_t,\ell)+  \sum_{j \in \mathcal{S}}^{}{ p(j|s_t,\ell)u_{t+1}^*(j) } \Big \} \label{u1} \;\;\;\;\; \\
{\overline {u}}_t^*(s)&=&\underset{\ell \in \mathcal{L}}{ \mbox{min }} \Big \{ \sum_{j \in \mathcal{S}}^{}{ p(j|s_t,\ell) \Big ( \overline{r} (s_t,\ell,j)+\overline{u}_{t+1}^*(j)  \Big ) } \Big \} \;\; \; \;\; \; \;\;\label{u2} \\
u_{\xi,t}^*(s)&=&\underset{\ell \in \mathcal{L}}{ \mbox{max }} \Big \{ \sum_{j \in \mathcal{S}}^{}{ p(j|s_t,\ell)  \Big ( \xi r (s_t,\ell) - \overline{r} (s_t,\ell,j) }\nonumber \\
 && + u_{\xi,t+1}^*(j) \Big) \Big \}, \label{u3}
\end{eqnarray}
for $t=0,..,T-1$.  For the boundary conditions, that is at time slot $T$, $u^*_T(s)$, $\overline{u}^*_T(s)$, and $u_{\xi,T}^*(s)$ are set to zero for each $s\in\mathcal{S}$.

In a non-risk state, the reward $r$ is given in \eqref{reward} and the risk signal is equal to zero whereas in a risk state the reward $r$ is set to zero and the risk signal $\overline r$ is set to one. 
\section{Algorithm Design}\label{algorithm}
In this section, we present two algorithms: (i) finite-horizon value iteration algorithm which assumes that all the model parameters are known to the controller, namely the channel statistics (model-based algorithm), and (ii) reinforcement learning algorithm which does not require the controller knowledge of channel statistics (model-free algorithm). 
\subsection{Value Iteration Algorithm}
In order to find a policy that maximizes the weighted value function defined in \eqref{weighted_v}, we use the value iteration algorithm \cite{puterman}. In this algorithm, we proceed backwards: we start by determining the optimal action at time slot $T$ for each state, and successively consider the previous stages, until reaching time slot $0$ (see Algorithm 1).
%
\begin{algorithm}\label{alg:value}
\caption{Finite-Horizon Value Iteration Algorithm}\label{v_function}
\begin{algorithmic}[1]
\State Initialization: \textbf{for each s} 
\State $u_T^*(s) \gets 0$, $\overline{u}^*_T(s) \gets  0$, $u_{\xi,T}^*(s) \gets 0$
\State \textbf{Endfor}
\State $t\gets T-1$
\State  \textbf{while} $t\geqslant 0$ 
\State  \textbf{For each s }
\State \textbf{update} $u_t^*(s)$, $\overline {u}_t^*(s)$, and ${u}_{\xi, t}^*(s)$ according to \eqref{u1}, \eqref{u2}, and \eqref{u3}, respectively
\State \textbf{EndFor}
\State  $t \gets t-1$
\State  \textbf{EndWhile}
\end{algorithmic}
 \end{algorithm}
\subsection{Risk-Sensitive Reinforcement Learning Algorithm}
\begin{figure}[t]
\centering
\tikzset{
    >=stealth',
    punkt/.style={
           rectangle,
           rounded corners,
           draw=black,
           text width=6.5em,
           minimum height=2em,
           text centered},
    pil/.style={
           ->,
           shorten <=2pt,
           shorten >=2pt,},
           pil2/.style={
           ->,
           shorten <=2pt,
           shorten >=2pt,}         
}
\begin{tikzpicture}
\tikzset{
myarrow/.style={->, >=latex', shorten >=2pt},
mylabel/.style={text width=9em, text centered} 
} 
\node [minimum height=1.9cm, minimum width=2.5cm,  top color=white, bottom color=orange!25,
 rounded corners, text centered, draw=black] (node1)  { \parbox{1.95cm}{ Environment \\ (Wireless Channel)} };
\node[ below= 1.56cm of node1, minimum height=1.7cm, minimum width=3cm,top color=white, bottom color=yellow!20,
rounded corners,  text centered, draw=black]  (node2) {Learning Controller};
\node[left= -0.3 cm of  node2] (g) {}
edge[pil, black, bend left=50] (node1.west);
\node[right= -0.3 cm of  node1] (g) {}
edge[pil, black, bend left=50] (node2.east);
\draw [->, red] (node1.east)+(0,0.2) to   [out=1,in=0]  (1.6,-3.75) ;
\draw [->, dashed] (node1.east)+(0,-0.8) to   [out=80,in=50]  (1.6,-3.01) ;
\draw (1.2,-1.6) node[text=black] {new};%
\draw (1.2,-2.0) node[text=black] {state};%
\draw (3.01,-1.6) node[text=red] {risk};
\draw (3.3,-2.09) node[text=red] {signal $\overline{r}$};
\draw (-2.7,-1.65) node[] {action};
\draw (-2.5,-3.5) node[text=black] {Interaction};
\draw (3.5,-3.5) node[text=black] {Observation};
\draw (2.35,-1.8) node[text=black] {\large{$r$}};
\end{tikzpicture}\caption{Reinforcement Learning Model.}\label{fig:learning}
\end{figure}
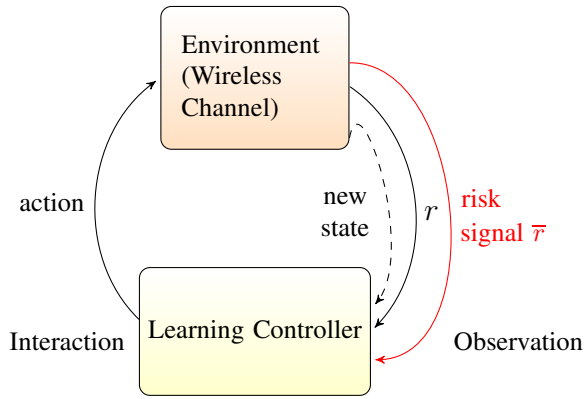
During the learning phase, the controller gets estimates of the value of each state-action pair. It updates its estimates through the interaction with the environment  where at each iteration it performs an action and then observes the reward, risk signal $\overline {r}$, and the next state (see Fig. \ref{fig:learning}). 

The learning controller chooses an action at each learning step following the $\varepsilon$-greedy policy, that is, it selects an action that maximizes its current estimate with probability $1-\varepsilon$, or a random action with probability $\varepsilon$. The parameter $\varepsilon$ captures the exploration-and-exploitation trade-off: when $\varepsilon \rightarrow 0$, the controller tends to choose an action that maximizes its  current state's estimated value; whereas when $\varepsilon \rightarrow 1$, the controller tends to choose randomly an action and to favor the exploration for optimality.

The state-action value function is given by \cite{Watkins1992,RL}
\begin{eqnarray*}
Q^{\pi}(s_t,\ell)=r(s_t,\ell)+\sum_{j \in \mathcal{S}}{p(j|s_t, \ell) \, {u}^{\pi}_{t+1}(j) },
\end{eqnarray*}
where the first term denotes the immediate reward, that is the number of successfully transmitted packets over all the users,  when the action $l$ is performed in state $s_t$; and the second term denotes the expected reward when the policy $\pi$ is followed in the subsequent decision stages.
Similarly to the state-action value function associated to the reward, we define the state-action value function associated to the risk $\overline{Q}^{\pi}$ as
\begin{eqnarray*}
\overline{Q}^{\pi}(s_t,\ell)=\sum_{j \in \mathcal{S} }{p(j|s_t,\ell) \left (  \overline  {r}(s_t,\ell,j) + \overline {u}^{\pi}_{t+1}(j) \right )}.
\end{eqnarray*}
Note that the introduction of the signal risk $\overline{r}$ enabled us to define a state-action value function, $\overline{Q}$ to the risk.

Besides, the state-action value function associated to the weighted formulation, $Q_{\xi}^{\pi}$, is given by
\begin{eqnarray*}
Q_{\xi}^{\pi}(s_t,\ell)=\xi Q^{\pi}(s_t,\ell)-\overline{Q}^{\pi}(s_t,\ell).
\end{eqnarray*}
Finally, the Q-function updates at the learning step $n$ (which should not be confused with the decision epoch $t$) are given by \cite{Watkins1992}
\begin{eqnarray}
Q^{(n+1)}(s_t,\ell)& \leftarrow &\big [ 1-\alpha_n (s_t,\ell) \big ] Q^{(n)}(s_t,\ell)   + \nonumber  \\ &&\alpha_n (s_t,\ell) \big [ r  + \underset{\ell \in \mathcal{L} }{ \mbox{ max } }\{  Q^{(n)}(s_{t+1},\ell) \} \big ],\;\;\;\; \;\; \label{q1} 
\end{eqnarray}
\begin{eqnarray}
\overline{Q}^{(n+1)}(s_t,\ell)& \leftarrow & \big [1-\alpha_n (s_t,\ell) \big] \overline{Q}^{(n)}(s_t,\ell)   + \nonumber \\  &&  \alpha_n (s_t,\ell) \big [ \overline r +\underset{\ell \in \mathcal{L} }{ \mbox{ min } }\{ \overline{Q}^{(n)}(s_{t+1},\ell) \} \big ], \;\;\;\; \; \;  \label{q2} 
\end{eqnarray}
and,
\begin{eqnarray}
Q_{\xi}^{(n+1)}(s_t,\ell)&  \leftarrow &\big[ 1-\alpha_n (s_t,\ell) \big] Q_{\xi}^{(n)}(s_t,\ell)+ \nonumber \\ && \alpha_n (s_t,\ell) \big [ \xi r-\overline r   +\underset{\ell \in \mathcal{L}}{\mbox{ max }} \{ Q_{\xi}^{(n)}(s_{t+1},\ell) \} \big ], \nonumber \\ \label{q3}
\end{eqnarray}
where $\alpha_n(s_t,\ell)$ denotes the learning rate parameter at step $n$ when the state $s_t$ and action $\ell$ are visited. 

The learning algorithm converges to the optimal state-action value function when each state-action pair is performed infinitely often and when the learning rate parameter satisfies for each $(s_t,\ell)$ pair (the proof is given in \cite{Watkins1992,RL2} and skipped here for brevity),
\begin{eqnarray*}
\sum_{n=1}^{\infty}{\alpha_n}(s_t,\ell)= \infty, \quad \mbox{ and } \quad \sum_{n=1}^{\infty}{\alpha_n^2(s_t,\ell)} < \infty.
\end{eqnarray*}
In this case, the Q-functions are related to the value functions as follows
 \begin{eqnarray*}
 \underset{\ell \in \mathcal{L}}{\mbox{max}} \left \{ Q(s_t,\ell) \right \}=u_t^{*}(s_t),  \quad 
 \underset{\ell \in \mathcal{L}}{\mbox{min}} \left \{ \overline {Q}(s_t,\ell) \right \}=\overline{u}_t^*(s_t),  
 \end{eqnarray*}
 \begin{eqnarray*}
 \underset{\ell \in \mathcal{L}}{\mbox{max}} \left \{ Q_{\xi}(s_t,\ell) \right \}=u_{\xi,t}^*(s_t). 
 \end{eqnarray*}
 
When a risk state is reached during the learning phase, the system is restarted according to the uniform distribution to a non-risk state. In addition, when $t \geqslant T$, we consider that an artificial absorbing state is reached and we reinitialize $t$ (see Algorithm 2). 
\begin{algorithm}\label{alg:value}
\caption{Q-learning Algorithm}\label{q_learning}
\begin{algorithmic}[1]
\State Initialization $t \gets 0$, $s_0\gets s$, $ n \gets 1$, 
\State \textbf{ for each $\ell \in \mathcal{L}$}
\State $Q(s_0,\ell)\gets 0$, $\quad$ $\overline{Q}(s_0,\ell) \gets 0$, $\quad$ $Q_{\xi}(s_0,\ell) \gets 0$
\State \textbf{End for}
\State \textbf{Repeat}
\State observe current state $s_t$
\State select and perform action $\ell$ in state $s_t$
\State observe the new state $s_{t+1}$, reward  $r$ and the risk $\overline r$
\State \textbf{update the Q-functions } $Q(s_t,l)$, $\overline{Q}(s_t,l)$, $Q_{\xi}(s_t, \ell)$ according to \eqref{q1}, \eqref{q2}, \eqref{q3} respectively
\State $t \gets t+1$
\State $n \gets n +1 $
\State \textbf{update} $\alpha_n$ 
\State \textbf{if} $t=T$, \textbf{then} $t\gets 0$ \quad  \emph{artificial absorbing state reached}
\State \textbf{if} $s_{t} \in \Phi$, \textbf{then} $s_t\sim\mbox{Unif} \{\mathcal{S}/ \Phi \} $ \;  \emph{absorbing state reached}
\State \textbf{until convergence}
\end{algorithmic}
 \end{algorithm}
\section{Performance Evaluation}\label{numerical}
In this section, we present the numerical results obtained with the value iteration and the learning algorithms in a variety of scenarios. We consider the setting of two users along with a number of channels $L=5$. For the arrival traffic, we consider the following truncated Poisson distribution
\begin{equation}
\mbox{Prob}(a=m)=
\left \{
\begin{array}{cc}
   \frac{\lambda^m/m!}{ \sum_{i=0}^{A_{max}}{\lambda^i/i!}} & \mbox{ if } m \leqslant A_{max} \\
\mbox{ zero } & \mbox{otherwise,}
\end{array}
\right.
\end{equation}
where $\lambda=3$ and $A_{max}=6$. The mean of the Bernoulli channel $\mu$ and the value of the parameter $\rho_{max}$ throughout this section are fixed to $0.6$ and $0.55$ respectively.
\subsection{Minimum-risk vs maximum-value policy}
First, we compare the performance of the minimum-risk policy (obtained when $\xi=0$), maximum-value policy (obtained when $\xi \rightarrow \infty$), weighted policy (when $\xi>0$), and the fixed policy which consists is assigning the same number of channels for each user at each time slot ($\ell_1=2$ and $\ell_2=3$).
\begin{figure}[t!]
\psfrag{a}{\large{$u_t(s)$}}
\psfrag{b}{\large{$\overline{u}_t(s)$}}
\psfrag{time}{time}
\psfrag{rrr}{$ \overline{\pi}^*$}
\psfrag{vvv}{$\pi^*$}
\psfrag{fff}{$\pi_f$}
\psfrag{www}{$\pi_{\xi}^*$}
\psfrag{0}{$0$}
\psfrag{1}{$1$}
\psfrag{2}{$2$}
\psfrag{3}{$3$}
\psfrag{4}{$4$}
\psfrag{5}{$5$}
\psfrag{6}{$6$}
\psfrag{7}{$7$}
\psfrag{8}{$8$}
\psfrag{9}{$9$}
\psfrag{1.2}{\small{$1.2$}}
\psfrag{1.4}{\small{$1.4$}}
\psfrag{1.6}{\small{$1.6$}}
\psfrag{1.8}{\small{$1.8$}}
\psfrag{2.2}{\small{$2.2$}}
\psfrag{2.4}{\small{$2.4$}}
\psfrag{2.6}{\small{$2.6$}}
\psfrag{2.8}{\small{$2.8$}}
\psfrag{0.1}{\hskip -0.2 cm { $0.1$}}
\psfrag{0.2}{\hskip -0.2 cm { $0.2$}}
\psfrag{0.3}{\hskip -0.2 cm { $0.3$}}
\psfrag{0.4}{\hskip -0.2 cm { $0.4$}}
\psfrag{0.5}{\hskip -0.2 cm { $0.5$}}
\psfrag{0.6}{\hskip -0.15 cm { \small{$0.6$}}}
\psfrag{0.7}{\hskip -0.2 cm { $0.7$}}
\psfrag{0.8}{\hskip -0.2 cm { $0.8$}}
\psfrag{0.9}{\hskip -0.2 cm { $0.9$}}
\centering\includegraphics[width=9.7cm]{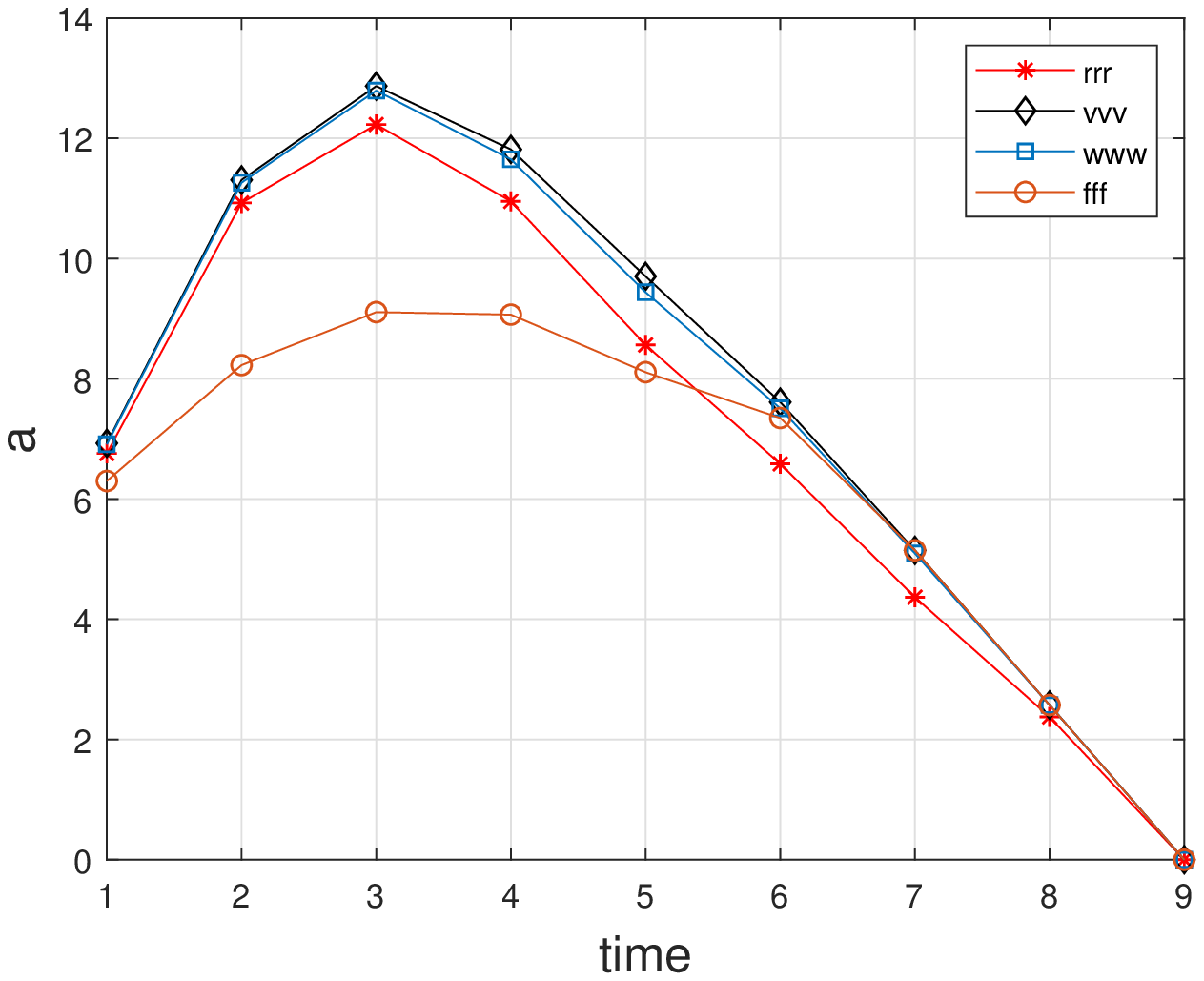} \centering\includegraphics[width=9.7cm]{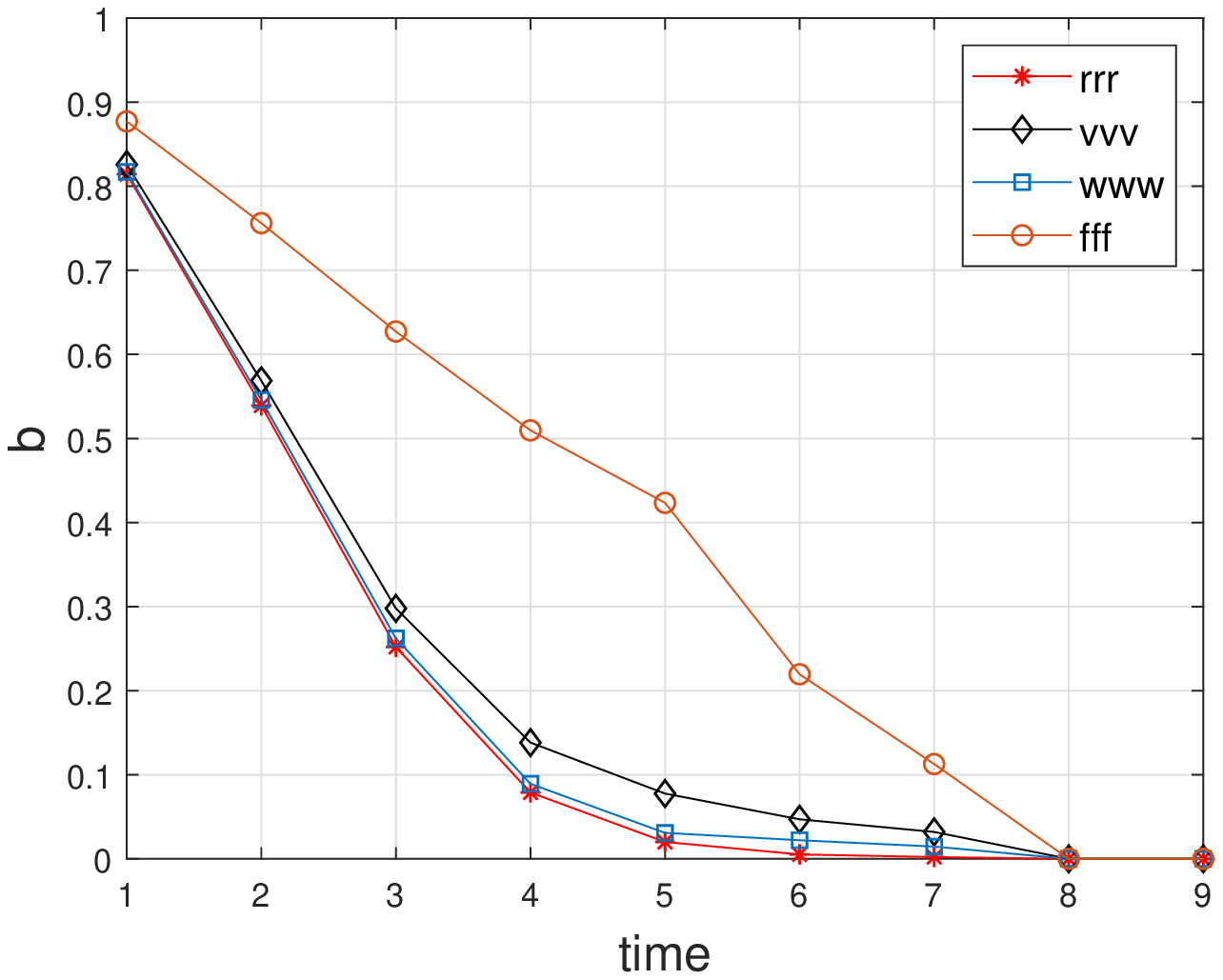}
\caption{Performance of the minimum-risk policy $\overline{\pi}^*$, the maximum-value policy $\pi^*$, the weighted-policy $\pi_{\xi}^*$ with $\xi=0.1$, and the fixed policy $\pi_f$. On the \textit{top}, $u_t(s)$, on the \textit{bottom}, $\overline{u}_t(s)$ where $s=0.3\times0$ and $T=9$.}\label{fig:risk}
\end{figure}

We depict in Fig. \ref{fig:risk}-\textit{top} the reward $u_t(s)$ given in \eqref{u1} as a function of time when $s=0.3\times0$ and different policies are followed. We observe that the maximum-value policy clearly outperforms the fixed and the minimum-risk policy. In Fig. \ref{fig:risk}-\textit{bottom} showing $\overline{u}_t(s)$ given in \eqref{u2}, we observe that the probability of visiting a risk-state when the fixed policy is followed is much higher than that obtained when the minimum-risk policy $\overline{\pi}^*$ is performed. For example, at the time step $t=5$, $\overline{u}_t(s)$ is equal to $0.42$ when the policy $\pi_f$ is performed whereas this value reduces to $0.02$ when the policy $\overline{\pi}^*$ is followed. In fact, the fixed policy does not take account of the experienced QoS of the users, and therefore, it is the policy which results in the highest risk-state visitation probability.  Besides, this probability decreases over time for all the policies. In fact, as time goes on, the probability of entering a risk-state over the remaining time steps decreases.

The reward $u_t(s)$ increases for the lower values of $t$ until reaching a maximum value and then it decreases, for all the policies. In fact, for the lower values of $t$, the probability of visiting a risk-state is high, and this affects the expected value of the reward (recall that in the risk state, the reward is equal to zero). As time goes on, this probability decreases, and thus the expected reward increases. However, at the further time steps, the number of remaining decision stages is low and hence the expected reward (total number of successfully transmitted packets over the remaining time slots) decreases.
\subsection{Learning}
In the learning algorithm, we simulate the wireless channel with a Bernoulli random variable with a number of trials equal to the number of channels associated to each packet for each user.
For the learning rate parameter $\alpha_n$, we considered the following expression \cite{q_rate}:
\begin{eqnarray}
\alpha_n=\frac{1}{ (1+n(s_t,\ell))^{\gamma}},
\end{eqnarray}
where $n(s_t,\ell)$ denotes the number of times the state-action pair $(s_t,\ell)$ was visited until iteration $n$, and $\gamma$ is a positive parameter $\in [0.5,1]$\cite{q_rate}.
\begin{figure}[t!]
\centering
\psfrag{ddddddddddd}{$\rho_1=0$}
\psfrag{eeeeeeeeeee}{$\rho_1=0.2$}
\psfrag{ggggggggggg}{$\rho_1=0.3$}
\psfrag{l}{$\ell_1^*$}
\psfrag{time}{time}
\psfrag{0}{$0$}
\psfrag{0.5}{}
\psfrag{1}{$1$}
\psfrag{1.5}{}
\psfrag{2}{$2$}
\psfrag{2.5}{}
\psfrag{3}{$3$}
\psfrag{3.5}{}
\psfrag{4}{$4$}
\psfrag{4.5}{}
\psfrag{5}{$5$}
\psfrag{5.5}{}
\includegraphics[width=9.2cm]{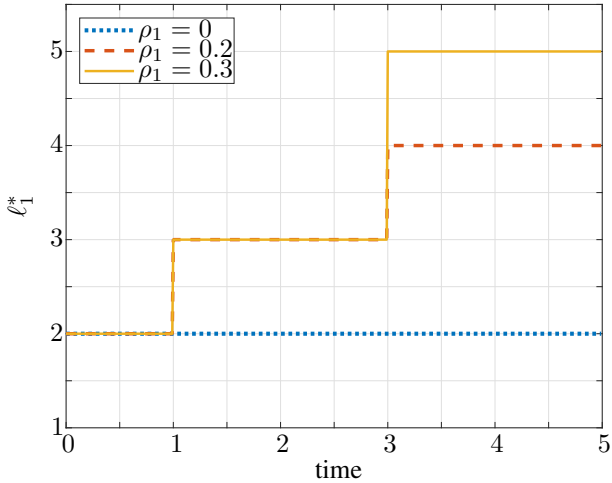}
\caption{Optimal policy $\ell_1^*$ as a function of time steps and $\rho_1$ with $\rho_2=0$ and $T=5$.}\label{optimal_policy}
\end{figure}

We depict in Fig.  \ref{optimal_policy} the optimal (minimum-risk) policy (number of channels to assign to user $1$ , $\ell_1 \in [0,..,5]$)  computed by the learning algorithm, as a function of time steps (decision epochs) and $\rho_1$, when $\rho_2$ is fixed to $0$. The figure shows a monotony property: the number of channels to assign to user $1$ increases with time and with $\rho_1$. In fact, as the QoS of user $1$ degrades ($\rho_1$ increases), more channels are assigned to it to compensate for this degradation; and as time goes on, this policy is more sensitive to this degradation as more channels are assigned for the same values of $\rho_1$, but at further time steps. 
\section{Conclusion}\label{conclusion}
In this work, we studied the problem of dynamic channel allocation for URLLC traffic in a multi-user multi-channel wireless network within a novel framework. Due to the stochastic nature of the problem related to time-varying, fading channels and random arrival traffic, we considered a finite-horizon MDP framework. We determined explicitly the probability of visiting a risk state and we wrote it as a cumulative return (risk signal). We then introduced a weighted global value function which incorporates two criteria: reward and risk. By virtue of the value iteration algorithm, we determined the optimal policy. Furthermore, we used a Q-learning algorithm to enable the controller to learn the optimal policy in the absence of channel statistics. We illustrated the performance of our algorithms with numerical studies, and we showed that by adapting the number of parallel transmissions in a smart way, the performance of the system can be substantially enhanced.

In the future work, we would like to take account of spatial diversity in the dynamic allocation scheme where both the BS and the user terminals can be equipped with multiple antennas to enhance the system performance.


\end{document}